\newtheorem{theorem}{Theorem}
\newtheorem{lemma}{Lemma}
\newtheorem{definition}{Definition}
\newtheorem{proof}{Proof}
\newtheorem{remark}{Remark}
\journal{Journal of \LaTeX\ Templates}
\begin{document}

\begin{frontmatter}

\title{Query complexity of unitary operation discrimination}
\author{Xiaowei Huang}
\author{Lvzhou Li\corref{mycorrespondingauthor}}
\address{Institute of Quantum Computing and Computer Theory, School of Computer and Engineering, Sun Yat-sen University, Guangzhou 510006, China}
\cortext[mycorrespondingauthor]{Corresponding author}
\ead{lilvzh@mail.sysu.edu.cn}





\begin{abstract}
Discrimination of unitary operations is fundamental in quantum computation
and information. A lot of quantum algorithms including the well-known 
Deutsch-Jozsa algorithm, Simon's algorithm, and Grover's algorithm can essentially
be regarded as discriminating among individual, or sets of unitary operations
(oracle operators). The problem of discriminating between two unitary operations
$U$ and $V$ can be described as: Given $X\in\{U, V\}$, determine which one 
$X$ is. If $X$ is given with multiple copies, then one can design an adaptive
procedure that takes multiple queries to $X$ to output the identification result
of $X$. In this paper, we consider the problem: How many queries are required 
for achieving a desired failure probability $\epsilon$ of discrimination between
$U$ and $V$. We prove in a uniform framework: (i) if $U$ and $V$ are discriminated 
with bound error $\epsilon$ , then the number of queries $T$ must satisfy 
$T\geq \left\lceil\frac{2\sqrt{1-4\epsilon(1-\epsilon)}}{\Theta (U^\dagger V)}\right\rceil$,
and (ii) if they are discriminated with one-sided error $\epsilon$, then there is
$T\geq \left\lceil\frac{2\sqrt{1-\epsilon^2}}{\Theta (U^\dagger V)}\right\rceil$, 
where $\lceil k\rceil$ denotes the minimum integer not less than $k$ and 
$\Theta(W)$ denotes the length of the smallest arc containing all the eigenvalues of 
$W$ on the unit circle.
\end{abstract}

\begin{keyword}
Quantum Computing \sep Unitary Operation Discrimination \sep  Quantum Query Complexity
\MSC[2020] 68Q12 \sep 81P45 
\end{keyword}

\end{frontmatter}


\section{Introduction}
Discrimination of  unitary operations  plays a  fundamental role in the field of quantum information science, as many quantum information processing tasks  eventually involve discriminating among unitary operations.  Specially, a quantum algorithm  can be essentially regarded as  a procedure to discriminate among individual, or sets of  unitary operations.  For example, the well-known Deutsch-Jozsa algorithm \cite{deutsch1992rapid} is  to distinguish  two set of unitary operations  $S_B$ and $S_C$ as follows:
\begin{align*}S_B&=\{O_x:x\in\{0,1\}^n, |x|=\frac{n}{2}\},\\
      S_C&=\{O_x:x\in\{0,1\}^n, |x|=0~ \text{or}~ 1\}
\end{align*}
where $S_B$ stands for the balanced functions, $S_C$ stands for the constant functions, and $O_x=\sum_i(-1)^{x_i}|x_i\rangle\langle  x_i|$ is the  oracle operator.
Discrimination among sets of unitary operations (oracle operators) standing for functions with different periodicities is central to  Simon's algorithm \cite{simon1994proceedings}
and Shor's algorithm \cite{shor199435th}. Actually, in the quantum query complexity model, any quantum algorithm for computing a Boolean function is  a procedure of discriminating between two sets of oracle operators.

Considering the significance of unitaries discrimination, it has been studied in depth (for a partial list \cite{acin2001statistical,d2001using,PhysRevLett.98.129901,PhysRevLett.99.170401,PhysRevLett.100.020503,PhysRevA.77.032337, PhysRevA.96.022303,chiribella2013identification, bae2015discrimination,cao2016minimal,PhysRevA.96.052327,kawachi2019quantum,cao2016local,cao2016determination}), and the discrimination problem has also been discussed for measurements \cite{PhysRevLett.96.200401}, Pauli channels \cite{PhysRevA.72.052302},   oracle operators \cite{chefles2007unambiguous}, and the general quantum operations \cite{PhysRevA.71.062340, li2008optimal, PhysRevA.73.042301,zbMATH05507378,PhysRevLett.103.210501}.

Discrimination of  unitary operations is generally transformed to discrimination of quantum states by preparing a probe state and then identifying the output states generated by different unitary operations.  Two unitary operations $U$ and $V$ are said to be
perfectly distinguishable (with  a single query), if there
exists a state $|\psi\rangle$ such that $U|\psi\rangle\perp V|\psi\rangle$.   It has been  shown that  $U$ and
$V$ are perfectly distinguishable
if, and only if $\Theta(U^\dagger V)\geq\pi$, where $\Theta(W)$
denotes the length of the smallest arc containing all the
eigenvalues of $W$ on the unit circle \cite{acin2001statistical,d2001using}. The situation changes dramatically when  multiple queries  are allowed, since any two different unitary operations are perfectly distinguishable in this case. Specifically,
it was shown that for any two different unitary
operations $U$ and $V$, there exist a finite number $N$ and a
suitable state $|\varphi\rangle$ such that $U^{\otimes N}|\varphi\rangle\perp V^{\otimes N}|\varphi\rangle$  \cite{acin2001statistical,d2001using}.
Such a discriminating scheme is intuitively called a {\it parallel scheme}. Note that in the parallel scheme, an $N$-partite
entangled state as an input is required  and plays a crucial role. Then, the
result was further refined in Ref. \cite{PhysRevLett.98.129901} by showing that the
entangled input state is not necessary for perfect discrimination of unitary operations. Specifically, Ref. \cite{PhysRevLett.98.129901} showed
that for any two different unitary operations $U$ and $V$, there
exist an input state $|\varphi\rangle$ and auxiliary unitary operations
$W_1,\dots,W_N$ such that $UW_NU\dots W_1U|\varphi\rangle\perp VW_NV\dots W_1V|\varphi\rangle$.  Such
a discriminating scheme is generally called a {\it sequential scheme}.

Note that in these researches mentioned above, it was assumed by default that the  unitary operations to be discriminated are under the
complete control of a single party who can  perform any physically
allowed operations to achieve an optimal discrimination. A more complicated case is that the
 unitary operations to be discriminated   are shared by
several spatially separated parties. Then, in this case a reasonable constraint
on the discrimination is that each party can only make local
operations and classical communication (LOCC). Despite this constraint, it has been shown that  any two bipartite unitary
operations can be perfectly discriminated by LOCC, when   multiple  queries to the unitary operations are allowed \cite{PhysRevLett.99.170401,PhysRevLett.100.020503,PhysRevA.77.032337, PhysRevA.96.022303,PhysRevA.96.052327,bae2015discrimination}.

All the above mentioned works focus on the perfect discrimination of unitary operations. If a failure probability can be tolerated, then the problem can be discussed in two more general cases: minimum-error  discrimination where a non-zero probability
of the identification result being erroneous is tolerated, and  unambiguous  discrimination where the  the identification result is always correct but we are not
always guaranteed a conclusive result. Specifically,  in the minimum-error  discrimination case, two unitary operations $U$ and $V$ are said to be discriminated with bounded error $\epsilon$, if each of them can be identified correctly with probability more than $1-\epsilon$; in the unambiguous  discrimination case,  $U$ and $V$ are said to be   discriminated with one-sided error $\epsilon$, if for each one, either it is identified correctly, or the discrimination procedure output  with probability less than $\epsilon$ the inconclusive result "I do not know".

Now suppose a unitary operation $X$ is selected from $\{U_1, U_2\}$ and can be accessed multiple times. Then a general procedure to identify $X$ (or discriminate between $U_1$ and $U_2$) is depicted in Fig. \ref{Fig_2}, where one adaptively performs a series of $X$ and some other auxiliary operations to an initial  state and then performs a measurement with  a result as the identifier of $X$. The problem considered in this paper is: How many queries are required for achieving a  desired failure probability $\epsilon$ of discrimination between $U_1$ and $U_2$. We prove in a uniform framework that  (i) if $U_1$ and $U_2$ are discriminated with bounded error $\epsilon$ (resp. with one-sided error $\epsilon$), then they need to be accessed by at least  $\left\lceil\frac{2\sqrt{1-4\epsilon(1-\epsilon)}}{\Theta (U_1^\dagger U_2)}\right\rceil$ queries (resp. $\left\lceil\frac{2\sqrt{1-\epsilon^2}}{\Theta (U_1^\dagger U_2)}\right\rceil$ queries).

The rest of this paper is organized as follows. Section \ref{sec:1} present some preliminary knowledge especially for fidelity of two unitaries. The main result   is presented in Section \ref{sec:2}. A conclusion is made in Section \ref{Con}.

\section{Preliminaries} \label{sec:1} One can refer to an excellent textbook \cite{nielsen2010quantum} for the details of quantum computation and quantum information. According to quantum mechanics, the evolution of a closed quantum system is described by a unitary operation (transformation). An operation $U$ is said to be unitary if $UU^\dagger=U^\dagger U=I$ where $U^\dagger$ denotes the adjoint of $U$. 
The fidelity of two unitary operations $U_1$ and $U_2$ is \footnote{In mathematical form, the fidelity should be defined as $F(U,V) \equiv\min_{|\Psi\rangle} |\langle\Phi| U^\dagger V\otimes I|\Phi\rangle|$, but  it is easy to see that the minimum value is achieved in Eq. (\ref{fd}).}
 \begin{align}F(U_1, U_2)\equiv\min_{|\psi\rangle}|\langle\psi| U_1^\dagger U_2|\psi\rangle|. \label{fd} \end{align} 
 If we let
 $U_1^\dagger U_2=\sum_{j=1}^{k}e^{i\theta_j}|j\rangle \langle j|$,
and  $|\psi\rangle=\sum_j \lambda_j|j\rangle$ with $\sum_{j}|\lambda_j|^2 = 1$, then there is
 \begin{align}
 F(U_1, U_2)
 &=\min\left\{ \left| \sum_{j}|\lambda_j|^2
 e^{i\theta_j} \right| : \sum_{j}|\lambda_j|^2 = 1 \right\}.
 \label{eq:sigmax}
 \end{align}
 Denote the  convex hull of $S=\left\{e^{i\theta_j} \right\}_j$ by   \begin{align}conv\left(S\right)=\left\{  \sum_{j}|\lambda_j|^2 e^{i\theta_j}: \sum_{j}|\lambda_j|^2 = 1 \right\}. \end{align}
 Then the fidelity can be represented by
 \begin{equation}
 F(U_1, U_2)=\min_{P\in conv\left(S\right)}||O-P||,
 \end{equation}
which states that $F(U_1, U_2)$ corresponds to the minimum distance from the origin $O$  to the convex hull $conv\left(S\right)$.

\begin{figure}
	\center \includegraphics[width=8cm]{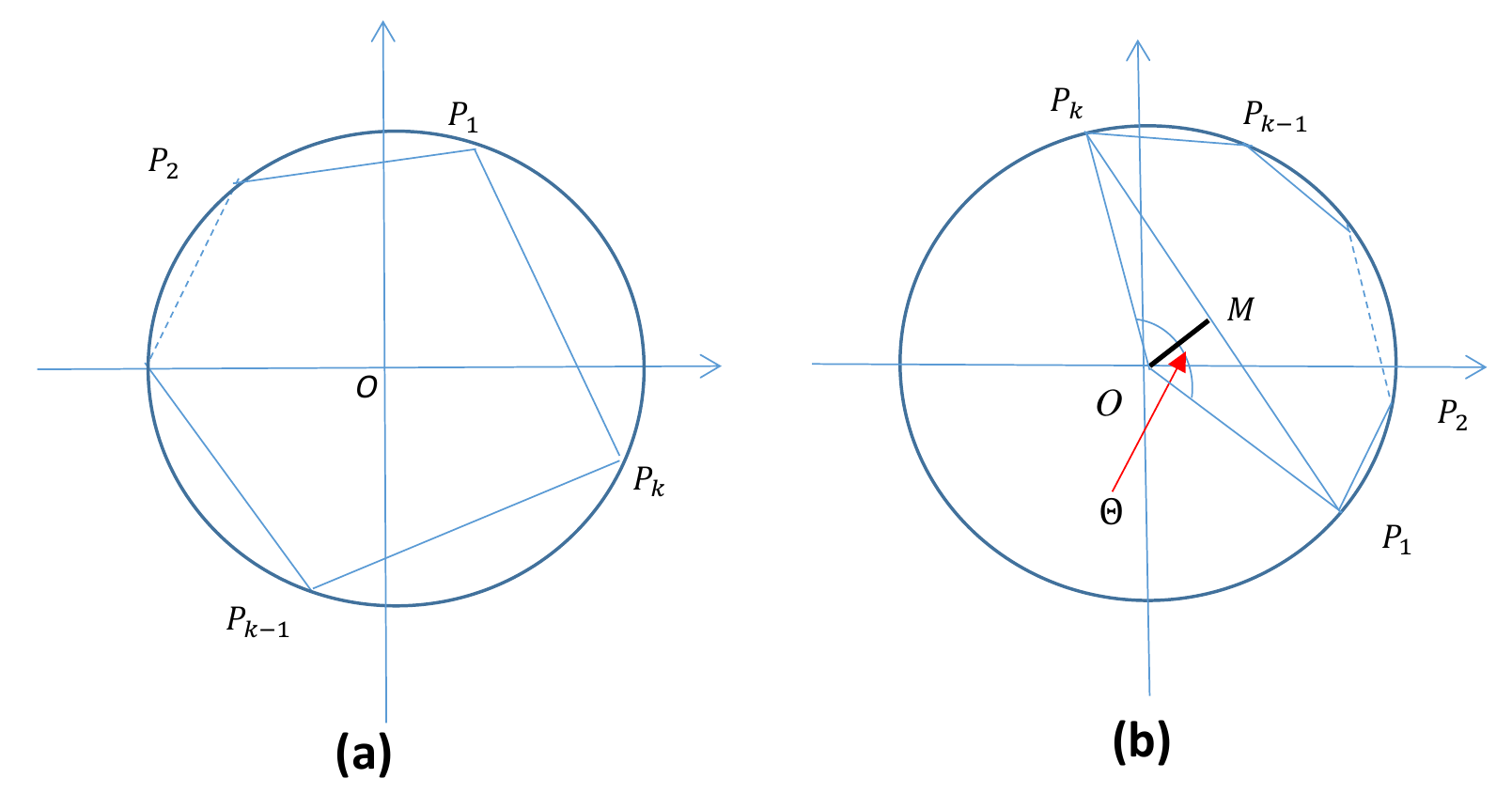}
	\caption{ The convex hull $conv(S)$. $P_j$ corresponds to  $e^{i\theta_j}$ with $j=1,\dots, k$, where without loss of generality, we assume  these points on the unit circle are $P_1, \dots, P_k$ in the counter-clockwise order.  In (a), the convex hull contains the original point $O$, which implies that $F(U_1, U_2)=0$. In (b), the convex hull does not contain $O$. Then $F(U_1, U_2)$ is equal to the distance between $O$ and $M$, where $M$ denotes the point in $conv(S)$ that has a minimum distance to the origin $O$.} \label{Fig 1}
\end{figure}

In geometry, each $e^{i\theta_j}$ stands for a point on the unit circle in the complex plane. As shown in Fig. \ref{Fig 1}, let $P_j$ denote the point $e^{i\theta_j}$ with $j=1,\dots,k$. Without loss of generality, assume the  counter-clockwise order of these points on the unit circle is $P_1, P_2, \dots, P_k$.   Denote  by $\diamond P_1 \dots P_k$ the region enclosed by the convex  polygon with endpoints $P_1,  \dots, P_k$. Then  $\diamond P_1\dots P_k$ is the convex hull $conv(S)$. By this geometry representation, we have
\begin{align}
F(U_1, U_2)=\begin{cases}
|OM| & O\not\in conv(S),\\
0 &   O\in conv(S).
\end{cases}
\end{align}

\begin{figure}
	\center \includegraphics[width=6cm]{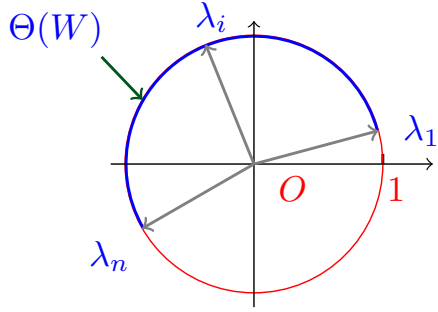}
	\caption{  The eigenvalues of $W$ on the unit circle, $\Theta(W)$ colored in blue.} \label{Fig 3}
\end{figure}

Let $\Theta(W)$
denote the length of the smallest arc containing all the
eigenvalues of unitary operation $W$ on the unit circle (depicted in Fig. \ref{Fig 3}). Then  $F(U_1, U_2)$ can be rewritten as
\begin{align}
F(U_1, U_2)
=\begin{cases}
\cos \frac{\Theta(U_1^\dagger U_2)}{2} & 0\leq \Theta(U_1^\dagger U_2)<\pi,\\
0 &   \Theta(U_1^\dagger U_2)\geq \pi.
\end{cases}\label{F}
\end{align}

The trace norm of operator $A$ is defined as
$||A||_{tr}=\text{Tr}\sqrt{A^\dagger A}$.
In this way, $\| |\varphi\rangle\langle\varphi|- |\phi\rangle\langle\phi|\|_{tr}$ denotes the trace distance between $ |\varphi\rangle$ and $|\phi\rangle$. For simplicity, we  denote it by  $\| |\varphi\rangle- |\phi\rangle\|_{tr}$ throughout this paper, and it can be verified that  \begin{equation}
\| |\varphi\rangle- |\phi\rangle\|_{tr}=2\sqrt{1-|\langle\varphi|\phi\rangle|^2}.
\end{equation}

\section{Lower bound on  query complexity}\label{sec:2}

A general procedure to discriminate two unitary operations $U_1$ and $U_2$  can be seen as a sequence of unitaries $W_T U_iW_{T-1}U_i\cdots W_1U_iW_0$, where $W_k$'s are fixed unitaries and $T$ is called the query complexity to $U_i (i=1,2)$.
As shown in Fig. 	\ref{Fig_2}, the discrimination process is as follows:
\begin{itemize}
	\item[\textbf{1.}]  Start with an initial state $|\varphi_{0}\rangle$.
	
	\item[\textbf{2.}] Perform the operators $W_0, U_i, W_1, U_i,\dots ,W_T$ in sequence, and then  obtain the following state  corresponding  to the chosen $U_i$:
 \begin{align}
|\Phi_i\rangle=W_T (U_i\otimes I)W_{T-1} (U_i\otimes I)\cdots W_0|\varphi_{0}\rangle.
\end{align}

	\item[\textbf{3.}] A $2$-outcome measurement is performed on $|\Phi_i\rangle$ with  a result  as the identifier of $U_i$.
\end{itemize}

\begin{figure}
	\center \includegraphics[width=9cm]{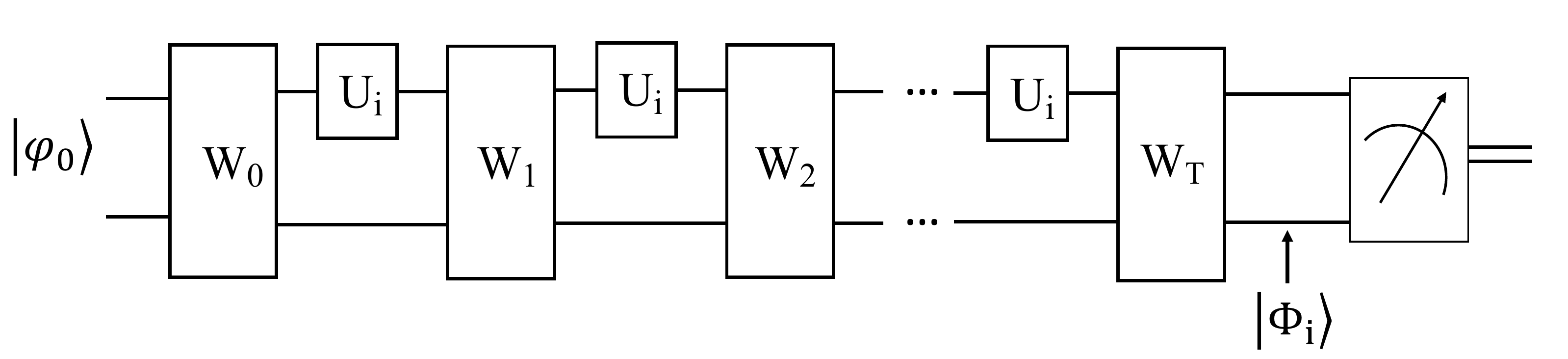}
	\caption{ A $T$-query procedure to discriminate  two unitary operations. $U_i (i=1,2)$ are the unitaries to be discriminated. $W_0, W_1, \dots ,W_T$ are some fixed unitaries. $T$ is called the query complexity.} \label{Fig_2}
\end{figure}

\begin{definition}

(i) $U_1$ and $U_2$ are said to be discriminated with bounded error $\epsilon$, if there exists a procedure  as shown in Fig.  \ref{Fig_2} and a POVM  measurement $\{\Pi_1,\Pi_2\}$ such that $\langle \Phi_i |\Pi_i|\Phi_i\rangle\geq 1-\epsilon $ for $i=1,2$.

(ii) $U_1$ and $U_2$ are said to be discriminated with one-sided error $\epsilon$, if there exists a procedure  as shown in Fig.  \ref{Fig_2} and a POVM measurement $\{\Pi_0,\Pi_1,\Pi_2\}$ such that $\langle \Phi_2 |\Pi_1|\Phi_2\rangle=\langle \Phi_1 |\Pi_2|\Phi_1\rangle=0$  and $\langle\Phi_i|\Pi_0|\Phi_i\rangle\leq \epsilon$ for $i=1,2$. \label{Df1}
\end{definition}

 The problem considered in this paper is: How many queries are required for achieving a  desired error $\epsilon$ of discrimination between  $U_1$ and $U_2$. Our main result is given in Theorem \ref{Thm1}. In order to prove that, some intermediate results (i.e., Lemma \ref{Lm1}, \ref{Lm2} and \ref{Lm3}) are required.

\begin{lemma}
(i) If $U_1$ and $U_2$ can be discriminated with bounded error $\epsilon$, then $|\langle\Phi_1|\Phi_2\rangle|\leq 2\sqrt{\epsilon(1-\epsilon)}$; (ii) If $U_1$ and $U_2$ can be discriminated with  one-sided  error $\epsilon$, then $|\langle\Phi_1|\Phi_2\rangle|\leq{\epsilon}.$ \label{Lm1}
\end{lemma}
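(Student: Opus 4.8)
The idea for both parts is to insert the POVM's resolution of the identity between $\langle\Phi_1|$ and $|\Phi_2\rangle$ and to bound each resulting term by Cauchy--Schwarz, writing $\langle\Phi_1|\Pi_j|\Phi_2\rangle=\langle\sqrt{\Pi_j}\,\Phi_1\,|\,\sqrt{\Pi_j}\,\Phi_2\rangle$ and controlling the factors $\sqrt{\langle\Phi_i|\Pi_j|\Phi_i\rangle}$ via the inequalities of Definition~\ref{Df1} together with the normalization $\langle\Phi_i|\Phi_i\rangle=1$.

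I would dispatch part (ii) first, as it is immediate. From $\langle\Phi_2|\Pi_1|\Phi_2\rangle=0$ and $\Pi_1\ge0$ one gets $\sqrt{\Pi_1}\,|\Phi_2\rangle=0$, hence $\Pi_1|\Phi_2\rangle=0$, and symmetrically $\Pi_2|\Phi_1\rangle=0$. Expanding $\langle\Phi_1|\Phi_2\rangle=\langle\Phi_1|(\Pi_0+\Pi_1+\Pi_2)|\Phi_2\rangle$ then kills the $\Pi_1$ and $\Pi_2$ contributions (the latter because $\Pi_2$ is Hermitian), leaving $\langle\Phi_1|\Phi_2\rangle=\langle\Phi_1|\Pi_0|\Phi_2\rangle$, so Cauchy--Schwarz gives $|\langle\Phi_1|\Phi_2\rangle|\le\sqrt{\langle\Phi_1|\Pi_0|\Phi_1\rangle}\sqrt{\langle\Phi_2|\Pi_0|\Phi_2\rangle}\le\epsilon$.

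For part (i), with $\Pi_1+\Pi_2=I$, the same expansion and Cauchy--Schwarz give $|\langle\Phi_1|\Phi_2\rangle|\le\sqrt{x(1-y)}+\sqrt{(1-x)y}$, where $x:=\langle\Phi_1|\Pi_1|\Phi_1\rangle$, $y:=\langle\Phi_2|\Pi_2|\Phi_2\rangle$, using $\langle\Phi_1|\Pi_2|\Phi_1\rangle=1-x$ and $\langle\Phi_2|\Pi_1|\Phi_2\rangle=1-y$. The hypothesis is exactly $x,y\in[1-\epsilon,1]$. I would then observe that on this square $x+y\ge 2(1-\epsilon)\ge1$ (for $\epsilon\le\tfrac12$), so $\sqrt{x(1-y)}+\sqrt{(1-x)y}$ is non-increasing in each variable and hence attains its maximum $2\sqrt{\epsilon(1-\epsilon)}$ at $x=y=1-\epsilon$. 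Alternatively I could route through the Holevo--Helstrom bound $\langle\Phi_1|\Pi_1|\Phi_1\rangle+\langle\Phi_2|\Pi_2|\Phi_2\rangle\le 1+\tfrac12\||\Phi_1\rangle-|\Phi_2\rangle\|_{tr}$, which forces $\||\Phi_1\rangle-|\Phi_2\rangle\|_{tr}\ge 2(1-2\epsilon)$, and combine it with the identity $\||\Phi_1\rangle-|\Phi_2\rangle\|_{tr}=2\sqrt{1-|\langle\Phi_1|\Phi_2\rangle|^2}$ recalled in Section~\ref{sec:1} to conclude $|\langle\Phi_1|\Phi_2\rangle|^2\le 4\epsilon(1-\epsilon)$.

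I expect the only real friction to be getting the sharp constant in (i): bounding each Cauchy--Schwarz term crudely by $\sqrt\epsilon$ yields only $2\sqrt\epsilon$, so one must either carry out the (elementary) monotonicity analysis of $\sqrt{x(1-y)}+\sqrt{(1-x)y}$ on $[1-\epsilon,1]^2$ or lean on the optimality of the Helstrom measurement. Part (ii) should go through without obstacle.
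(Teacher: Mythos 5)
Your part (ii) is exactly the paper's argument (you even make explicit the step $\Pi_1|\Phi_2\rangle=0$, $\Pi_2|\Phi_1\rangle=0$ that the paper leaves implicit behind its Cauchy--Schwarz remark). For part (i), your primary route differs from the paper's: you insert $\Pi_1+\Pi_2=I$, apply Cauchy--Schwarz to each term to get $|\langle\Phi_1|\Phi_2\rangle|\le\sqrt{x(1-y)}+\sqrt{(1-x)y}$ with $x,y\in[1-\epsilon,1]$, and then do the small optimization; your monotonicity observation is correct, since $\partial_x\bigl(\sqrt{x(1-y)}+\sqrt{(1-x)y}\bigr)\le 0$ exactly when $x+y\ge 1$, so the maximum $2\sqrt{\epsilon(1-\epsilon)}$ is indeed attained at the corner $x=y=1-\epsilon$ when $\epsilon\le\tfrac12$. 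The paper instead proves a Helstrom-type bound from scratch: writing $P_s=\langle\Phi_1|\Pi_1|\Phi_1\rangle+\langle\Phi_2|\Pi_2|\Phi_2\rangle$ in two ways, averaging, and bounding $\mathrm{Tr}\bigl((\Pi_2-\Pi_1)(|\Phi_2\rangle\langle\Phi_2|-|\Phi_1\rangle\langle\Phi_1|)\bigr)$ by the trace norm to get $P_s\le 1+\sqrt{1-|\langle\Phi_1|\Phi_2\rangle|^2}$, then combining with $P_s\ge 2(1-\epsilon)$ -- which is precisely the alternative you sketch at the end, except the paper derives the Holevo--Helstrom inequality rather than citing it. Your route buys a uniform mechanism (resolution of identity plus Cauchy--Schwarz) handling (i) and (ii) in parallel, at the cost of the elementary two-variable maximization needed for the sharp constant; the paper's route gets the constant with no optimization and works directly with the trace distance, which is the quantity reused in Lemma \ref{Lm3}. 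Note also that both arguments need $\epsilon\le\tfrac12$ in case (i): you say so explicitly, while the paper squares $1-2\epsilon\le\sqrt{1-|\langle\Phi_1|\Phi_2\rangle|^2}$ without comment, so this is not a gap relative to the paper.
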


\begin{proof}
Item (i) is essentially a  known and widely used  result. For completeness, we give a proof in the following. If we denote $P_s=\langle \Phi_1 |\Pi_1|\Phi_1\rangle +\langle \Phi_2 |\Pi_2|\Phi_2\rangle $, then we have
\begin{align}
P_s&= \langle \Phi_1 |(I- \Pi_2)|\Phi_1\rangle +\langle \Phi_2 |\Pi_2|\Phi_2\rangle \\
&=1+Tr\left(\Pi_2(|\Phi_2\rangle\langle\Phi_2|-|\Phi_1\rangle\langle\Phi_1|)\right).
\end{align}
Similarly, we have	
\begin{align}
P_s=1-Tr\left(\Pi_1(|\Phi_2\rangle\langle\Phi_2|-|\Phi_1\rangle\langle\Phi_1|)\right).
\end{align}
Thus there is
\begin{align}
P_s&=1+\frac{1}{2}Tr\left((\Pi_2-\Pi_1)|\Phi_2\rangle\langle\Phi_2|-|\Phi_1\rangle\langle\Phi_1|\right)\\
&\leq 1+\frac{1}{2}  \|\Phi_2\rangle\langle\Phi_2|-|\Phi_1\rangle\langle\Phi_1|\|_{tr}\\
&=1+ \sqrt{1-|\langle\Phi_2|\Phi_1\rangle|^2}
\end{align}
where the inequality holds by observing that $|\Phi_2\rangle\langle\Phi_2|-|\Phi_1\rangle\langle\Phi_1|=Q-S$ for two positive operators and $\|\Phi_2\rangle\langle\Phi_2|-|\Phi_1\rangle\langle\Phi_1|\|_{tr}=Tr(Q)+Tr(S)$.
On the other hand, by  item (i) of Definition \ref{Df1},  there is $P_s\geq 2(1-\epsilon)$. Thus we get $|\langle\Phi_1|\Phi_2\rangle|\leq 2\sqrt{\epsilon(1-\epsilon)}$.

It is easy to see item (ii)  by observing the following:
\begin{align*}
|\langle\Phi_1|\Phi_2\rangle|&=|\langle\Phi_1|(\Pi_0+\Pi_1+\Pi_2)|\Phi_2\rangle|\\
&\leq |\langle\Phi_1|\Pi_0 |\Phi_2\rangle|+|\langle\Phi_1|\Pi_1 |\Phi_2\rangle|+|\langle\Phi_1|\Pi_2 |\Phi_2\rangle|\\
&\leq \sqrt{\langle\Phi_1|\Pi_0 |\Phi_1\rangle} \sqrt{\langle\Phi_2|\Pi_0 |\Phi_2\rangle}\\
&\leq \sqrt{\epsilon}. \sqrt{\epsilon}=\epsilon
\end{align*}
where the second inequality follows from $|\langle\Phi_1|\Pi_1 |\Phi_2\rangle|=|\langle\Phi_1|\Pi_2 |\Phi_2\rangle|=0$ (implied in item (ii) of Definition \ref{Df1}) and the application of Cauchy-Schwarz inequality to $|\langle\Phi_1|\Pi_0 |\Phi_2\rangle|$.
\end{proof}

For a $T$-query discrimination procedure as shown in Fig. \ref{Fig_2}, denote
\begin{align}|\varphi^k_i\rangle=W_k(U_i\otimes I)W_{k-1}(U_i\otimes I)\cdots W_0|\varphi_0\rangle\end{align} for $i=1,2$, and $k=0,1,\dots, T$. Note that $|\varphi^0_i\rangle=W_0|\varphi_0\rangle$ and
\begin{align}
|\varphi^{k+1}_i\rangle= W_{k+1}(U_i\otimes I)|\varphi^k_i\rangle.
\end{align}

Denote \begin{align}
D_k=\| |\varphi_1^k\rangle- |\varphi_2^k\rangle\|_{tr}=2\sqrt{1-|\langle\varphi_1^k|\varphi_2^k\rangle|^2}.
\end{align}
Then we obtain the following crucial result.
\begin{lemma}
 $D_0=0$ and $D_{k+1}\leq D_k+2\sqrt{1-F^2(U_1, U_2)}$ for $k=0,1,\dots,T-1$. \label{Lm2}
\end{lemma}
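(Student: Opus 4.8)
The plan is to handle the two assertions separately. The equality $D_0=0$ is immediate: by definition $|\varphi_1^0\rangle=|\varphi_2^0\rangle=W_0|\varphi_0\rangle$, so the two pure states coincide and their trace distance vanishes.

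For the recurrence I would rely on two standard facts: the trace norm is invariant under conjugation by a unitary, so $\||\psi\rangle-|\phi\rangle\|_{tr}$ (i.e. $\||\psi\rangle\langle\psi|-|\phi\rangle\langle\phi|\|_{tr}$) is unchanged when the same unitary is applied to both $|\psi\rangle$ and $|\phi\rangle$; and the trace distance obeys the triangle inequality. Starting from $|\varphi_i^{k+1}\rangle=W_{k+1}(U_i\otimes I)|\varphi_i^k\rangle$, cancelling the common $W_{k+1}$ gives $D_{k+1}=\|(U_1\otimes I)|\varphi_1^k\rangle-(U_2\otimes I)|\varphi_2^k\rangle\|_{tr}$. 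Inserting the intermediate vector $(U_1\otimes I)|\varphi_2^k\rangle$ and applying the triangle inequality splits $D_{k+1}$ into a term $\|(U_1\otimes I)|\varphi_1^k\rangle-(U_1\otimes I)|\varphi_2^k\rangle\|_{tr}$, which equals $D_k$ by unitary invariance, plus a term $\|(U_1\otimes I)|\varphi_2^k\rangle-(U_2\otimes I)|\varphi_2^k\rangle\|_{tr}$.

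It remains to bound this second term by $2\sqrt{1-F^2(U_1,U_2)}$. Using the identity $\||\psi\rangle-|\phi\rangle\|_{tr}=2\sqrt{1-|\langle\psi|\phi\rangle|^2}$ recalled in Section~\ref{sec:1}, this term equals $2\sqrt{1-|\langle\varphi_2^k|(U_1^\dagger U_2\otimes I)|\varphi_2^k\rangle|^2}$, so I need the estimate $|\langle\varphi_2^k|(U_1^\dagger U_2\otimes I)|\varphi_2^k\rangle|\ge F(U_1,U_2)$. I would prove this by writing the left-hand side as $|\mathrm{Tr}((U_1^\dagger U_2)\rho)|$, where $\rho$ is the reduced density operator of $|\varphi_2^k\rangle$ on the system register; diagonalising $\rho$ exhibits $\mathrm{Tr}((U_1^\dagger U_2)\rho)$ as a convex combination of numbers of the form $\langle g|U_1^\dagger U_2|g\rangle$, each of which lies in $conv(S)$ with $S=\{e^{i\theta_j}\}_j$ the spectrum of $U_1^\dagger U_2$; hence $\mathrm{Tr}((U_1^\dagger U_2)\rho)\in conv(S)$ and $|\mathrm{Tr}((U_1^\dagger U_2)\rho)|\ge\min_{P\in conv(S)}\|O-P\|=F(U_1,U_2)$. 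Putting the pieces together yields $D_{k+1}\le D_k+2\sqrt{1-F^2(U_1,U_2)}$.

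The only genuinely substantive step is this last inequality: it says that attaching an ancilla cannot push the overlap below the single-system fidelity $F(U_1,U_2)$ — the fact alluded to in the footnote of Section~\ref{sec:1}. Everything else is the triangle inequality together with unitary invariance of the trace norm; the closed form $F(U_1,U_2)=\cos(\Theta(U_1^\dagger U_2)/2)$ is not needed here, only later in the proof of the main theorem.
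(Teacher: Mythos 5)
Your proof is correct and follows essentially the same route as the paper: cancel $W_{k+1}$ by unitary invariance of the trace norm, insert the intermediate vector $(U_1\otimes I)|\varphi_2^k\rangle$, apply the triangle inequality, and bound the remaining term using $|\langle\varphi_2^k|(U_1^\dagger U_2\otimes I)|\varphi_2^k\rangle|\geq F(U_1,U_2)$. The only difference is that you explicitly establish this last inequality via the reduced density operator and the convex-hull characterization of $F(U_1,U_2)$, whereas the paper merely asserts it as implied by the definition of fidelity (its footnote); this fills in a detail rather than taking a different approach.
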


\begin{proof} First it is easy to see that $D_0=0$. For $D_{k+1}$ we have
\begin{align*}
D_{k+1}&=\| |\varphi_1^{k+1}\rangle- |\varphi_2^{k+1}\rangle\|_{tr}\\
&=\| W_{k+1}(U_1\otimes I)|\varphi_1^{k}\rangle-W_{k+1}(U_2\otimes I)|\varphi_2^{k}\rangle\|_{tr}\\
&=\| (U_1\otimes I)|\varphi_1^{k}\rangle- (U_2\otimes I)|\varphi_2^{k}\rangle\|_{tr}\\
&=\| (U_1\otimes I)|\varphi_1^{k}\rangle- (U_1\otimes I)|\varphi_2^{k}\rangle+(U_1\otimes I)|\varphi_2^{k}\rangle-(U_2\otimes I)|\varphi_2^{k}\rangle\|_{tr}\\
&\leq \| |\varphi_1^{k}\rangle- |\varphi_2^{k}\rangle\|_{tr}+ \|(U_1\otimes I)|\varphi_2^{k}\rangle-(U_2\otimes I)|\varphi_2^{k}\rangle\|_{tr}\\
&=D_k+2\sqrt{1-|\langle\varphi_2^{k}| U_1^\dagger U_2\otimes I |\varphi_2^{k}\rangle|^2}\\
&\leq D_k+2\sqrt{1-F^2(U_1,U_2)},
\end{align*}
where the first inequality follows from the triangle inequality and the unitary invariance of trace distance. The last inequality follows from the fact that $F(U,V) \leq |\langle\Phi| U^\dagger V\otimes I|\Phi\rangle|$ for any $|\Phi\rangle$ which is implied by the definition in Eq. (\ref{fd}).
\end{proof}
\begin{lemma} Suppose $U_1, U_2$ are discriminated by using $T$ queries. Then we have:
(i) if $U_1, U_2$ are discriminated with bounded error $\epsilon$, then $D_T\geq 2\sqrt{1-4\epsilon(1-\epsilon)}$;
(ii) if  $U_1, U_2$ are discriminated with one-sided  error $\epsilon$, then $D_T\geq 2\sqrt{1-\epsilon^2}$. \label{Lm3}
\end{lemma}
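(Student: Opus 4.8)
The plan is to reduce this lemma directly to Lemma \ref{Lm1} by observing that the final state $|\Phi_i\rangle$ of the discrimination procedure is exactly $|\varphi_i^T\rangle$. Indeed, from the definitions, $|\varphi_i^T\rangle = W_T(U_i\otimes I)W_{T-1}(U_i\otimes I)\cdots W_0|\varphi_0\rangle = |\Phi_i\rangle$, so $|\langle\varphi_1^T|\varphi_2^T\rangle| = |\langle\Phi_1|\Phi_2\rangle|$ and hence
\begin{align}
D_T = 2\sqrt{1-|\langle\varphi_1^T|\varphi_2^T\rangle|^2} = 2\sqrt{1-|\langle\Phi_1|\Phi_2\rangle|^2}.
\end{align}

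The second ingredient is the elementary monotonicity fact: the map $x\mapsto 2\sqrt{1-x^2}$ is decreasing on $[0,1]$, so any \emph{upper} bound on $|\langle\Phi_1|\Phi_2\rangle|$ translates into a \emph{lower} bound on $D_T$. For part (i), Lemma \ref{Lm1}(i) gives $|\langle\Phi_1|\Phi_2\rangle|\le 2\sqrt{\epsilon(1-\epsilon)}$, whence $|\langle\Phi_1|\Phi_2\rangle|^2\le 4\epsilon(1-\epsilon)$ and therefore $D_T\ge 2\sqrt{1-4\epsilon(1-\epsilon)}$. For part (ii), Lemma \ref{Lm1}(ii) gives $|\langle\Phi_1|\Phi_2\rangle|\le\epsilon$, so $D_T\ge 2\sqrt{1-\epsilon^2}$.

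One should double-check that $4\epsilon(1-\epsilon)\le 1$ (true for all $\epsilon$, with equality at $\epsilon=1/2$) and $\epsilon^2\le 1$ (true for $\epsilon\le 1$), so that the quantities under the square roots are nonnegative and the bounds are meaningful. Beyond this, there is essentially no obstacle: the lemma is just the contrapositive packaging of Lemma \ref{Lm1} in terms of trace distance, and it is stated separately only because the subsequent argument (combining it with the step-by-step growth bound of Lemma \ref{Lm2} to extract a bound on $T$) is cleaner when phrased through $D_T$. The only place to be slightly careful is the identification $|\Phi_i\rangle=|\varphi_i^T\rangle$, which relies on matching the indexing conventions in Fig.~\ref{Fig_2} with the recursive definition of $|\varphi_i^k\rangle$.
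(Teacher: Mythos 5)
Your proof is correct and matches the paper's own argument: both identify $|\varphi_i^T\rangle$ with $|\Phi_i\rangle$, write $D_T = 2\sqrt{1-|\langle\Phi_1|\Phi_2\rangle|^2}$, and then invoke Lemma \ref{Lm1} to turn the overlap upper bounds into the stated lower bounds on $D_T$. Your explicit remarks about monotonicity and the nonnegativity of the radicands are just a slightly more careful spelling-out of the same step the paper leaves implicit.
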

\begin{proof}
\begin{align*}
D_T&=\| |\varphi_1^T\rangle- |\varphi_2^T\rangle\|_{tr}=\| |\Phi_1\rangle- |\Phi_2\rangle\|_{tr}\\
&=2\sqrt{1-|\langle\Phi_1|\Phi_2\rangle|^2}.
\end{align*}
Combining the above formula with Lemma \ref{Lm1}, we get the result.
\end{proof}

Now we are in the position to present our main result and its proof.
\begin{theorem}
	Suppose $U_1, U_2$ are discrimited by using $T$ queries. Then we have:
	
	(i) if they are discriminated with bounded error $\epsilon$, then $T\geq \left\lceil \frac{2\sqrt{1-4\epsilon(1-\epsilon)}}{\Theta (U_1^\dagger U_2)}\right\rceil$;
	
	(ii) if they are discriminated with one-sided error $\epsilon$, then $T\geq \left\lceil\frac{2\sqrt{1-\epsilon^2}}{\Theta (U_1^\dagger U_2)}\right\rceil$. \label{Thm1}
\end{theorem}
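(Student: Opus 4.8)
The plan is to sandwich $D_T$ between an upper bound obtained by unfolding the recursion of Lemma \ref{Lm2} and the lower bound supplied by Lemma \ref{Lm3}, and then convert the resulting numerical inequality into a lower bound on $T$ via the geometric identity (\ref{F}).

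First I would iterate Lemma \ref{Lm2}: since $D_0=0$ and $D_{k+1}\le D_k+2\sqrt{1-F^2(U_1,U_2)}$ for $k=0,\dots,T-1$, a one-line induction on $k$ yields $D_T\le 2T\sqrt{1-F^2(U_1,U_2)}$. Next I would bound $\sqrt{1-F^2(U_1,U_2)}$ by $\tfrac{1}{2}\Theta(U_1^\dagger U_2)$. Writing $\Theta:=\Theta(U_1^\dagger U_2)$ and using (\ref{F}): when $0\le\Theta<\pi$ we have $F(U_1,U_2)=\cos(\Theta/2)$, so $\sqrt{1-F^2(U_1,U_2)}=\sin(\Theta/2)\le\Theta/2$ by the elementary inequality $\sin x\le x$ for $x\ge 0$; when $\Theta\ge\pi$ we have $F(U_1,U_2)=0$, so $\sqrt{1-F^2(U_1,U_2)}=1\le\Theta/2$ since $\Theta/2\ge\pi/2>1$. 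In either regime $\sqrt{1-F^2(U_1,U_2)}\le\Theta/2$, hence $D_T\le T\Theta$.

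Finally I would combine this with Lemma \ref{Lm3}. In case (i), $2\sqrt{1-4\epsilon(1-\epsilon)}\le D_T\le T\Theta$, so $T\ge \frac{2\sqrt{1-4\epsilon(1-\epsilon)}}{\Theta}$, and since $T$ is a nonnegative integer this forces $T\ge\left\lceil\frac{2\sqrt{1-4\epsilon(1-\epsilon)}}{\Theta}\right\rceil$; case (ii) is identical with $2\sqrt{1-4\epsilon(1-\epsilon)}$ replaced by $2\sqrt{1-\epsilon^2}$. Every step here is routine; the only spot requiring a moment's care is the $\Theta\ge\pi$ branch of (\ref{F}), where $F$ drops to $0$ and one must verify that $\sqrt{1-F^2}\le\Theta/2$ still holds — it does, precisely because $\pi/2>1$. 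One also tacitly assumes $\Theta>0$, i.e. $U_1\ne U_2$, which is automatic whenever discrimination below error $1/2$ is possible and makes the stated bound otherwise vacuous.
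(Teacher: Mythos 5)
Your proposal is correct and follows essentially the same route as the paper: unfold Lemma \ref{Lm2} to get $D_T\le 2T\sqrt{1-F^2(U_1,U_2)}$, lower-bound $D_T$ by Lemma \ref{Lm3}, and convert $\sqrt{1-F^2}$ into $\Theta/2$ via Eq. (\ref{F}) and $\sin x\le x$. The only difference is cosmetic: where the paper dismisses the $F(U_1,U_2)=0$ case by noting one query suffices, you verify the bound $\sqrt{1-F^2}\le\Theta/2$ directly for $\Theta\ge\pi$, which is a slightly more uniform way to close that branch.
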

\begin{proof}
 Below we prove the result for case (i): $U_1, U_2$ are discriminated with bounded error $\epsilon$. The other case can be proved similarly.  First, by Lemma \ref{Lm2} we have
\begin{align*}
D_T&=(D_T-D_{T-1})+(D_{T-1}-D_{T-2})+(D_1-D_0)+D_0\\
&\leq 2T \sqrt{1-F^2(U_1,U_2)}.
\end{align*}
Then by Lemma  \ref{Lm3}, there is
\begin{align}
2\sqrt{1-4\epsilon(1-\epsilon)}\leq D_T \leq 2 T \sqrt{1-F^2(U_1,U_2)},
\end{align}
which leads to
\begin{align}
T\geq\sqrt{ \frac{1-4\epsilon(1-\epsilon)}{1-F^2(U_1,U_2)}}. \label{T1}
\end{align}

If $F(U_1, U_2)=0$, then $U_1$ and $U_2$ can be perfectly discriminated with one query. Thus,  we need only consider the case of $F(U_1, U_2)\neq 0$. Then by Eq. (\ref{F}), we have $F(U_1, U_2)=\cos \frac{\Theta(U_1^\dagger U_2)}{2}$, which leads to
\begin{align}
\sqrt{1-F^2(U_1,U_2)}&=\sqrt{1-\cos^2 \frac{\Theta(U_1^\dagger U_2)}{2}}\\
&=\sin \frac{\Theta(U_1^\dagger U_2)}{2}\\
&\leq \frac{\Theta(U_1^\dagger U_2)}{2}.\label{F1}
\end{align}
Combining (\ref{F1}) and (\ref{T1}), we obtain
\begin{align}T\geq \frac{2\sqrt{1-4\epsilon(1-\epsilon)}}{\Theta(U_1^\dagger U_2)}.\end{align}
\end{proof}

\begin{remark}
If $U_1, U_2$ are required to discriminated perfectly (that is, $\epsilon=0$), then both the two cases lead to $ T\geq \frac{2}{\Theta(U_1^\dagger U_2)}.$ This is consistent with the result given in \cite{PhysRevLett.98.129901} that two unitaries $U_1, U_2$ can be discriminated perfectly by making queries to the unitaries $\left\lceil\frac{\pi}{\Theta(U_1^\dagger U_2)}\right\rceil$ times.
\end{remark}

Compared with a related work \cite{kawachi2019quantum}, two main differences are as follows: (i) We have obtained the query complexity lower bounds in a uniform framework for both the bounded error case  and the one-sided error case, whereas Ref. \cite{kawachi2019quantum} considered only the former case. (ii) We have presented a lower bound for an arbitrary $\epsilon$, but the proof in \cite{kawachi2019quantum} has dependence on the specific value $\epsilon=\frac{1}{3}$.

\section {Conclusion} \label{Con}

We have considered the query complexity of discrimination between two unitary operations $U$ and $V$.  It is proved that (i) for minimum-error discrimination, at least $\left\lceil\frac{2\sqrt{1-4\epsilon(1-\epsilon)}}{\Theta (U^\dagger V)}\right\rceil$ queries are required to discriminate between  $U$ and $V$ with bounded error $\epsilon$, and (ii) for ambiguous discrimination, at least  $\left\lceil\frac{2\sqrt{1-\epsilon^2}}{\Theta (U^\dagger V)}\right\rceil$ queries are required to discriminate between  them with one-sided error $\epsilon$. The proof is presented in a uniform framework.


\bibliography{refs}

\end{document}